\newtheorem{theorem}{Theorem}
\DeclareMathOperator\erf{erf}
\begin{document}

%\makeatletter 
%\setlength{\arrayrulewidth}{1pt}
%\setlength{\columnsep}{6.5mm}
%\setlength\bibsep{1pt}
%\twocolumn[
 % \begin{@twocolumnfalse}

%\begin{frontmatter}

\huge{\noindent Thermal conductance of interleaving fins}%\tnoteref{mytitlenote}}

%% Group authors per affiliation:
%\author{Michiel A. J. van Limbeek}
%\cortext[mycorrespondingauthor]{Corresponding author}
%\ead{m.a.j.vanlimbeek@utwente.nl}
%\author{Srinivas Vanapalli}
%\address{University of Twente, Postbus 217, 7500 AE Enschede, The Netherlands}
%\fntext[myfootnote]{Since 1880.}

%% or include affiliations in footnotes:
%\author[mymainaddress,mysecondaryaddress]{Elsevier Inc}
%\ead[url]{www.elsevier.com}
\vspace{0.6cm}
\normalsize
\noindent Michiel A. J. van Limbeek$^{a,b}$ \footnote{Corresponding author: michiel.vanlimbeek@ds.mpg.de} and Srinivas Vanapalli$^b$\\
$^a$\textit{Max Planck Institute for Dynamics and Self-Organization, 37077 G\"ottingen, Germany}\\
$^b$\textit{University of Twente, Postbus 217, 7500 AE Enschede, The Netherlands}\\
%\fntext[myfootnote]{Since 1880.}

%\begin{abstract}
\section*{Abstract}
Interleaving fins can significantly increase the heat transfer by increasing the effective area per unit base area. The fins are separated uniformly by a gap, which is filled with a flow medium to control the heat flux. The heat flux between the plates depends strongly on the thermal conductivity of the fin material and the medium between them as well as the dimensions. In earlier studies empirical a fitting method is used to determine the total effectiveness of the fins. However, it required complete characterization of the fins for each new set of operating conditions. In this paper, a simplified analytical model, but still preserving the main physical traits of the problem is developed. This model reveals the dimensionless parameter group containing both material properties and the fin geometry that govern the heat transfer. Rigorously testing of the model using a numerical finite element model shows an accuracy within 2 \% over a large parameter space, varying both dimensions and material properties. Lastly, this model is put to test with previously measured experimental data and a good agreement is obtained.
%\end{abstract}

%\begin{keyword}
%fins\sep effectiveness \sep heat switch \sep mathematical model
%\end{keyword}

%\end{frontmatter}

%\linenumbers
\section{Introduction}

Temperature control is essential in many situations: from regulating the body temperature of biological species to keeping your drinks cold in your fridge and preventing your computer from overheating. Whereas rapid heating can be easily achieved by resistive electrical heating, rapid cooling is more complicated as heat needs to be transported towards a heat sink. Many cases use conduction for transport for which the thermal conductance is constant. For some cases however, this is not desirable and one wants to regulate the conductance ie change it in time.
  
As an example in hyperpolarized MRI molecular imaging systems, the sample sleeves that accommodate variable temperature inserts are integrated in the cryostat structure, requiring the entire system to be warmed up for servicing. In a recent development a heat switch was used to thermally disconnect the sample sleeve from the cold plate for quick servicing \cite{Stautner_2019}. Another common example is a thermal battery, where some (phase change) material \cite{medrano2009experimental,castell2011maximisation,alawadhi2000performance,farid2004review} is used to maintain  a desired operating temperature window for the instrument. The battery should be thermally coupled to a cold sink to charge the system \cite{BONFAIT2009326}. In situ cryofixation of cells is of growing interest in phase‐contrast microscopy for studying dynamic cellular processes associated with physiological and pathophysiological conditions \cite{schneider1995cryo,schneider1998cryo}. Moreover, many space instruments require that the cryocooler system possess a very high level of reliability. This need for high reliability requires that some form of redundancy be incorporated. One common implementation is standby redundancy. Until the primary unit breaks down the thermal connection to the redundant cooler should be low \cite{ross2002cryocooler}. However, when the standby unit is needed in case of the break down of the primary unit, the thermal conductance should be high. A similar thermal behavior is essential for the connection between equipment and space radiators \cite{bugby2010two,glaister1996application,gross1970thermal}. 

Unifying these examples is the use of a thermal switch, whose conductance can be changed rapidly. In the off-state, the heat transfer is minimized, whereas in the on-state it is maximized in order to exchange heat rapidly with the heat sink/source.  Such a switch can be constructed by having two plates with a small gap between them \cite{catarino2008neon}. The conductivity across the gap increases dramatically when the content is changed from a vacuum by a gas \cite{hamburgen1998interleaved} or by replacing it with a liquid.  The gas pressure can be controlled actively or passively by an evaporator \cite{romera2010control} or sorption cells \cite{catarino2008neon, burger2007long}.

To increase the heat flux density per base area, interleaving fins can be used. Rows of fins are attached on the two aforementioned plates in a staggered configuration. Previous research on interleaved fins explored the effective thermal resistance across the base plates as a function of the fin dimensions assuming uniform temperature along the length of the fin. This assumption require high thermal conductivity of the fin compared to the gas, and a further requirement is to have a low fin aspect ratio. Studies were done at both cryogenic \cite{shirron2005passive, dipirro2014heat} and room temperature \cite{vanapalli2015passive,krielaart2015compact}.

Although fins are an efficient way to increase heat exchange to the surroundings, they can loose its efficiency when the fin cools down in the longitudinal direction. This effect can be characterized by the Biot number of the fin: $Bi=h L /k$ where $h$ is the heat transfer coefficient, $L$ the length of the fin and $k$ the thermal conductivity. For $Bi\ll 1$ the conduction in the fin can easily supply the heat towards the cooling edges and no gradients occur, whereas for $Bi\gg 1$ the opposite holds and the fin has a non-uniform temperature profile. Although here the heat transfer coefficient is associated with convective transport towards the far-field temperature $T_\infty$, we expect that in the present case of two opposing staggered fin arrays a similar behaviour can be expected. Indeed, when replacing $h$ by purely conduction over a gas filled gap we obtain $Bi^*=k_\mathrm{g}L /(k_\mathrm{s} D)$, where $D$ denotes the (uniform) gap thickness and the subscripts g and s refer to the gas and solid domains. What differs however from the classical fin is that here $T_\infty$ is now the \textit{local} temperature of the opposing fin, in this way coupling the two sides of the thermal link.

In this study we develop an analytical one-dimensional  model, which is validated by numerical solutions of the problem. We identify a more accurate Biot number and find that indeed the cooling starts to be relevant when this number is of the order unity. Several ways are then explored to test the applicability of the model and for which parameters it suffers accuracy. We also provide an expression for the heat flux across the thermal link, which is more relevant for the design.  Finally we test the gap experimentally in a range where the fins suffer from cooling, finding good agreement with our analytical prediction.

%Interleaving fins are used in a number of applications to control heat transfer between an object and a sink; in space radiators to control the temperature of equipment \cite{bugby2010two,glaister1996application,gross1970thermal}, in electronics cooling as thermal couplers \cite{hamburgen1998interleaved}, in a heat switch to control heat flow at cryogenic \cite{shirron2005passive, dipirro2014heat} and room temperature \cite{vanapalli2015passive,krielaart2015compact}. Previous research on interleaved fins explored the effective thermal resistance across the base plates as a function of the fin dimensions assuming uniform temperature along the length of the fin. This assumption require high thermal conductivity of the fin compared to the gas, and a further requirement is to have a low fin aspect ratio.   

%Optimize by using fins, otherwise $q''=\Delta T/(\delta /k_g + D/k+\delta/k)$, which for identical plates becomes just $\Delta T/(2 \delta /k + D/k_g)$\\ known that cooling can occur for 

\section{Problem description and Approach}
The thermal link is made up by two opposing plates with staggered fins. In between them is a gap of width $\hat D$, which  filled with a gas of conductivity $\hat{ k}_g$, where the hat is used to indicate dimensional quantities. The hot plate and cold plate of the stack is separated by a distance $\hat{ \mathcal{L}}$ and the problem is quasi-2D. To increase the effective area, fins of width $\hat W$ are attached on both plates in the space between them. Although in general the spacing and length are free to choose, we here restrict ourselves to the case of a constant separation $\hat D$. The fins start from a position $\hat \delta$ from the top or bottom of the stack.  The region where heat is exchanged is then $\hat L=\hat{\mathcal{L}}-2\hat \delta$. Figure \ref{fig:domain} shows this geometry. Many fins are positioned next to each other in the y-dimension  with spacing $2\hat W$. The hot plate is at $\hat T=\hat{T}_h$ and the cold plate at $\hat T=\hat{T}_c$. We are interested on the heat transfer between the plate, which have a conductivity $\hat{k}_s$. We identified thus four geometrical parameters, namely $\hat{\mathcal{L}}$, $\hat \delta$, $\hat W$, $\hat D$ and three parameters related to the heat transfer: the conductivities of the gas $\hat{k}_{g}$ and solid $\hat{k}_{s}$ and the temperature difference between the hot and cold plate. 
\begin{figure}
\includegraphics[width=\columnwidth]{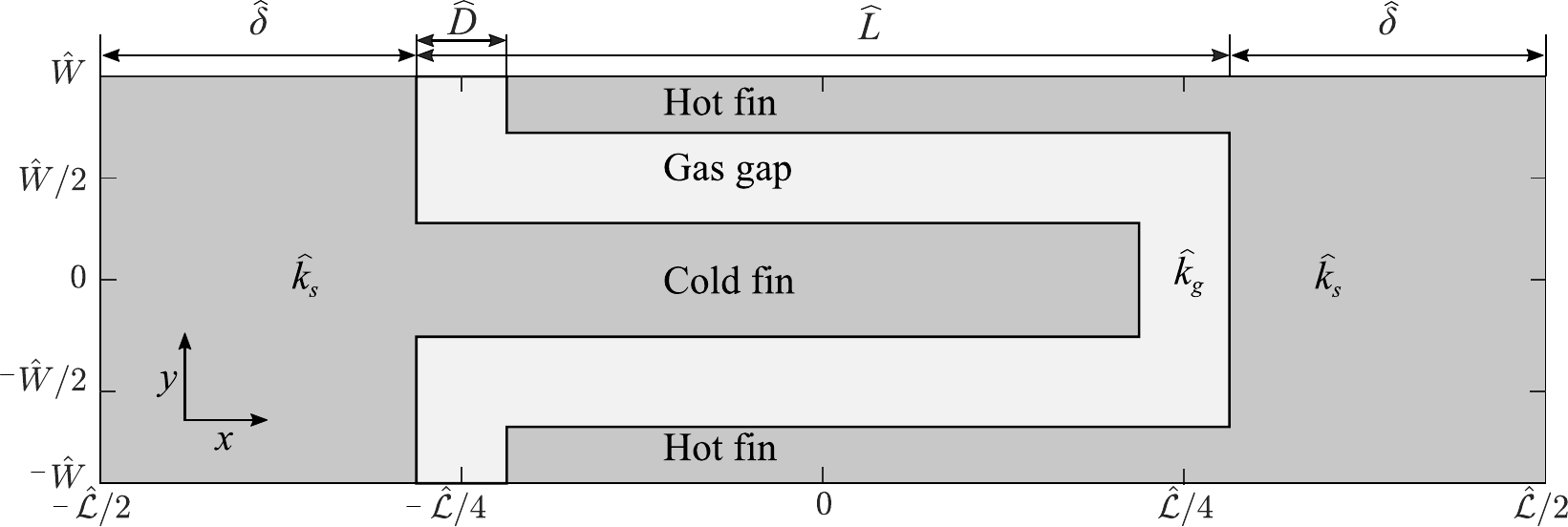}
\caption{Sketch of the geometry of a single pair of fins}
\label{fig:domain}
\end{figure}
Let us now outline the procedure to describe the heat transfer analytically, without the need of a full quasi- two dimensional numerical simulation.
\paragraph{Isothermal fins} Similar to the classical problem of a single fin in an infinite fluid (see \cite{Bejan} 2.7), we expect in the limit of $\hat{k}_\mathrm{s} \rightarrow \infty$ no gradients to emerge into the fin. The heat exchanged $\hat Q$ is then simply $-\hat{k}_\mathrm{g} \hat A \Delta \hat T /\hat D $, with $\hat A$ the fin area. Consider a array of interleaving fins on a plate of size $\hat Z$ times $\hat Y$. One finds for the total area $\hat A=\frac{\hat Y}{2\hat W}2(\hat W+\hat{\mathcal{L}}-2\hat{\delta} -\hat{D})\hat Z$ and obtains for the heat flux $\hat q''=\hat Q/(\hat Z \hat Y)$:
\begin{equation}
\hat q''= -(\hat T_h-\hat T_c)\hat k_g \frac{\hat W+\hat{\mathcal{L}}-2\hat \delta-\hat D}{\hat D\hat W}.
\label{eq:isothermalflux}
\end{equation} 
This solution will be used as a reference to compare the case in which the fin is subject to non-isothermal effects, which will be discussed now.
\paragraph{Non isothermal fins}
Let us now sketch the solution for non-isothermal fins. The slenderness of the fins prompts one to use a one-dimensional model, similar to the classical fin problem. We study a single pair of fins, for which we find a profile $\hat T(x)$ for the temperature in the fin at the cold plate. Similarly for the fin at the hot plate we have the profile $\hat{\mathcal{T}}(\hat x)$. We will then use the temperature difference $\hat{\Theta} (\hat x)$ between them, as the heat exchanged at any position in the fin is proportional to $\hat \Theta$. The problem is described by two coupled heat equations, which require four boundary conditions to solve.  The problem will be solved in the following section.
\section{Modelling cooling in the fins }
\label{sec:math}
Depending on the length scales and material properties, we explore the most simple solution to the problem, without introducing too many errors compared the full 2-D (numerical) calculation. 
Let us first investigate the conditions where a one dimensional approach is applicable. Since we here study a steady-state situation without any heat sources, the heat equation in the domain reads 
\begin{equation}
\nabla \cdot \vec{j}=0,
\label{heateq}
\end{equation}
where $\vec{j}=\hat{k}\nabla \hat{T}$; the heat flux which is modelled using Fourier's law. In the current study we assume $\hat{k}$ to be independent of the temperature. Considering a single fin now, we can rescale the $\hat{x}$ and $\hat{y}$ coordinate with the length $\hat{\mathcal{L}}$ and width $\hat{W}$ of the fin as $\hat{x}=\hat{\mathcal{L}}\bar{x}$ and $\hat{y}=\hat{W}\bar{y}$ ,= obtaining:
\begin{equation}
\hat{ k}_s \left(\partial_{\hat{x}\hat{x}}\hat{T}+\partial_{\hat{y}\hat{y}}\hat{T}\right)=\hat{ k}_s\left(\partial_{\bar{x}\bar{x}}\hat{T}+\frac{\hat{\mathcal{L}}^2}{(\hat{W}-\hat{D})^2}\partial_{\bar{y}\bar{y}}\hat{T}\right)=0,
\end{equation} 
which acts on the domain $x,y=[0,1]$. Inspection shows that for large $\frac{\hat{\mathcal{L}}^2}{(\hat{W}-\hat{D})^2}$ the gradients in the $\bar{y}$ direction become small compared those in the $\bar{x}$ direction. This prompt us to assume $\hat{T}(\hat{x},\hat{y})\approx \hat{T}(\hat{x})$, which we will validate a posteriori. Equation \ref{heateq} is now integrated over a small slice of the fin. The slice extends over the full width of the fin $\hat W-\hat D$ and has a thickness $\Delta \hat{x}$ so we obtain using Gauss-theorem in 2 dimensions:
\begin{align}
\int_v \nabla \cdot \vec{j} \; \mathrm{d}V=\oint \vec{j}\cdot \vec{n}  \; \mathrm{d}\ell&=\; 0& \nonumber\\
\hat{ k}_s(\hat{W}-\hat{D})\left(\frac{\mathrm{d}\hat{T}(\hat{x}+\Delta \hat{x})}{\mathrm{d}{\hat{x}}}- \frac{\mathrm{d}\hat{T}(\hat{x})}{\mathrm{d}{\hat{x}}} \right)+2\Delta\hat{x} \hat{q}_\mathrm{gap}&=\;0.&
\end{align}
 Evaluating the limit of $\Delta \hat{x}\rightarrow 0$ yields,
\begin{equation}
\hat{ k}_s (\hat{W}-\hat{D})\frac{\mathrm{d}^2\hat{T}(\hat{x})}{\mathrm{d}\hat{x}^2 }+ 2\hat{q}_\mathrm{gap}=0,
\end{equation}
where a Taylor expansion was used. Here $\hat{q}_\mathrm{gap}$ is the heat exchanged with the opposing fin, which we model as $-\frac{\hat{k}_g}{\hat D} (\hat{T}(\hat{x})-\hat{\mathcal{T}}(\hat{x}))$. Using a similar procedure for the opposing fin gives us the following coupled equations for the temperature profiles $T(\hat{x})$ and $\mathcal{T}(\hat{x})$:
\begin{align}
\hat{ k}_s(\hat{W}-\hat{D})\frac{\mathrm{d}^2\hat{T}}{\mathrm{d}\hat{x}^2 }-\frac{2\hat{k}_g}{ \hat D} (\hat{T}(\hat{x})-\hat{\mathcal{T}})&=0\quad \mathrm{and} \nonumber \\
\quad \hat{ k}_s(\hat{W}-\hat{D})\frac{\mathrm{d}^2\hat{\mathcal{T}}}{\mathrm{d}\hat{x}^2 }-\frac{2\hat{k}_g}{ \hat D} (\hat{\mathcal{T}}-\hat{{T}})&=0\quad .
\label{eq:doubleT}
\end{align}
 The coupled system of equations is subject to the following boundary conditions: The temperature profiles should satisfy the Dirichlet boundary condition $\hat{T}(-\mathcal{L}/2)=\hat{T}_c$ or $\hat{\mathcal{T}}(\mathcal{L}/2)=\hat{T}_h$ respectively. The problem is closed by noting that heat is conserved in the system, from which we find: $\left. k_s \frac{\mathrm{d}\hat{T}}{\mathrm{d}x}\right|_{-\mathcal{L}/2}=k_s\left. \frac{\mathrm{d}\hat{\mathcal{T}}}{\mathrm{d}x}\right|_{\mathcal{L}/2}$.

It is interesting to investigate similarities with section 2.7.4 in reference Bejan\cite{Bejan} to validate the one dimensional model.  The present model is valid as long as the Biot number for the width of the fin    $Bi^{\dag}=h(\hat W-\hat D)/\hat{k}_s=\hat k_g(\hat W-\hat D)/(\hat{k}_s \hat D)\ll 1$. Here we used $h=\hat{k}_g/\hat{D}$ for the heat transfer coefficient over the conducting gas gap.
As long as $Bi^{\dag}=hW/k=k_gW/(kD)<1$, no gradients occur in the cross sectional plane. 
We now non-dimensionalize the spacial dimension and lengths by the thickness $\hat{\mathcal{L}}$ of the stack: $ \hat x =x \hat{\mathcal{L}}$ and the temperatures using the boundary temperatures (\textit{e.g.} $\hat T=(\hat T_h-\hat T_c) T +\hat T_c$). For thermal conductivities we rescale by the conductivity of the solid: $k=\hat{k}_g/\hat{k}_s$. Let $\Theta=T-\mathcal{T}$, and $C^2= \frac{4\hat{k}_g \hat{\mathcal{L}}^2}{\hat{k}_s \hat D( \hat W-\hat D)}$ we then obtain 
\begin{equation}
\label{eq:Theta}
\Theta''-C^2\Theta=0, 
\end{equation}
and 
\begin{equation}
\label{eq:T-Theta}
 T''-\frac{1}{2}C^2 \Theta=0, \quad \mathcal{T}''+\frac{1}{2}C^2 \Theta=0,
\end{equation}

where we used primes to represent the derivatives. Let the centre of the stack be at $\hat x=0$. We then find for the boundary conditions:
$T(-1/2)=0,$ $\mathcal{T}(1/2)=1$ and $T'(-1/2)=\mathcal{T}'(1/2)$.
The solution for $\Theta$ is then $\alpha \exp{({C}x)}+\beta \exp{(-{C}x)}$. 
\begin{theorem}
The symmetry of the geometry results in $\Theta$ to be an even function.
\end{theorem}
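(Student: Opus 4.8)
The plan is to avoid solving for $T$ and $\mathcal{T}$ explicitly and instead exploit the reflection symmetry of the problem about the midplane $x=0$. Physically, reflecting $x\mapsto-x$ interchanges the cold plate (at $x=-1/2$, where $T=0$) and the hot plate (at $x=1/2$, where $\mathcal{T}=1$); because the dimensionless temperature ranges over $[0,1]$, restoring the original hot/cold labelling after this reflection also requires the inversion $T\mapsto 1-T$. This motivates the candidate transformation $\bigl(T(x),\mathcal{T}(x)\bigr)\mapsto\bigl(1-\mathcal{T}(-x),\,1-T(-x)\bigr)$, and the heart of the argument is to show that the boundary-value problem consisting of (\ref{eq:T-Theta}) and its boundary conditions is invariant under it.

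First I would verify that the transformed pair $\tilde T(x)=1-\mathcal{T}(-x)$, $\tilde{\mathcal T}(x)=1-T(-x)$ solves the same system. Differentiating gives $\tilde T''(x)=-\mathcal{T}''(-x)$ and, crucially, $\tilde\Theta(x)=\tilde T(x)-\tilde{\mathcal T}(x)=\Theta(-x)$; inserting the original relation $\mathcal{T}''+\tfrac12 C^2\Theta=0$ evaluated at $-x$ then collapses $\tilde T''-\tfrac12 C^2\tilde\Theta$ to zero, and the analogous computation works for $\tilde{\mathcal T}$. I would then check the boundary data: $\tilde T(-1/2)=1-\mathcal T(1/2)=0$ and $\tilde{\mathcal T}(1/2)=1-T(-1/2)=1$ reproduce the two Dirichlet conditions, while $\tilde T'(x)=\mathcal T'(-x)$ and $\tilde{\mathcal T}'(x)=T'(-x)$ turn the flux condition $\tilde T'(-1/2)=\tilde{\mathcal T}'(1/2)$ back into the original $T'(-1/2)=\mathcal T'(1/2)$. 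Thus the transform maps the closed problem to itself.

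With invariance established, I would invoke uniqueness of the solution of this linear boundary-value problem to conclude that the solution is a fixed point of the map, i.e. $T(x)=1-\mathcal T(-x)$. Substituting into $\Theta=T-\mathcal{T}$ then gives $\Theta(-x)=T(-x)-\mathcal{T}(-x)=\bigl(1-\mathcal{T}(x)\bigr)-\bigl(1-T(x)\bigr)=\Theta(x)$, which is exactly evenness; in terms of the explicit solution $\Theta=\alpha e^{Cx}+\beta e^{-Cx}$ this is the statement $\alpha=\beta$.

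I expect the only subtle step to be the appeal to uniqueness, since it presupposes that the stated boundary conditions suffice to fix all integration constants of the coupled system. I would support this by observing that the system decouples into $\Theta''-C^2\Theta=0$ and $(T+\mathcal{T})''=0$, whose general solution carries exactly four constants, matched by the four stated boundary conditions; this both guarantees well-posedness and legitimises the fixed-point conclusion. As a more compact alternative to deriving the full relation $T(x)=1-\mathcal T(-x)$, it is enough to extract its weaker consequence $\Theta(-1/2)=\Theta(1/2)$ and read off $\alpha=\beta$ from $\alpha e^{C/2}+\beta e^{-C/2}=\alpha e^{-C/2}+\beta e^{C/2}$ together with $C\neq0$.
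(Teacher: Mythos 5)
Your invariance computation is correct and is in fact a sharper version of the paper's own argument. The paper reflects only $\Theta$: it notes that $\tilde{\Theta}(x)=\Theta(-x)$ solves \autoref{eq:Theta} and that $\Theta(0)=\tilde{\Theta}(0)$ (which is automatic), then appeals to uniqueness — a loose step, since for a second-order ODE agreement at one point does not determine the solution. You instead construct the full involution $(T(x),\mathcal{T}(x))\mapsto(1-\mathcal{T}(-x),\,1-T(-x))$, correctly including the hot/cold inversion $T\mapsto 1-T$, and verify that both equations in \autoref{eq:T-Theta} and the Dirichlet and flux conditions are preserved; every one of those checks is right, and this is the proper way to formalize ``symmetry of the geometry.''

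The genuine gap is in your uniqueness step. You assert that the four integration constants of the decoupled system ($\Theta''-C^2\Theta=0$ and $(T+\mathcal{T})''=0$) are ``matched by the four stated boundary conditions,'' but the paper states only three at this point: $T(-1/2)=0$, $\mathcal{T}(1/2)=1$ and $T'(-1/2)=\mathcal{T}'(1/2)$. The fourth piece of information — the heat balance at the fin base and through the support, which ultimately fixes the amplitude $\Theta_0$ — is introduced only \emph{after} the theorem. With three conditions the boundary-value problem has a one-parameter family of solutions, so invariance of the problem under your involution does not by itself force any particular solution to be a fixed point of the map: the inference ``invariant problem $+$ uniqueness $\Rightarrow$ symmetric solution'' is unavailable as stated, and your compact alternative ($\Theta(-1/2)=\Theta(1/2)$) inherits the same defect since it is extracted from the fixed-point relation. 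Fortunately your own decoupling observation supplies an immediate repair that needs no uniqueness at all: since $(T+\mathcal{T})''=0$, the derivative $(T+\mathcal{T})'$ is a constant, so the flux condition $T'(-1/2)=\mathcal{T}'(1/2)$ is equivalent to $\Theta'(-1/2)=-\Theta'(1/2)$; substituting $\Theta=\alpha e^{Cx}+\beta e^{-Cx}$ gives $\alpha\left(e^{C/2}+e^{-C/2}\right)=\beta\left(e^{C/2}+e^{-C/2}\right)$, hence $\alpha=\beta$ because $\cosh(C/2)>0$, and $\Theta$ is even for \emph{every} solution of the stated problem. Alternatively, your fixed-point route can be salvaged by carrying along the full piecewise problem of \autoref{eq:threepiecessolution}, whose additional matching conditions are also invariant under your involution and do render the problem well-posed.
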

\begin{proof}
Let $\Theta$ and $\tilde{\Theta}$ be solutions to Eq. 3 and  $\tilde{\Theta}(x)=\Theta(-x)$. We then find $\Theta(-x)''-C^2\Theta(-x)=0\rightarrow \tilde{\Theta}(x)''-C^2\tilde{\Theta}(-x)=0$. At $x=0$ the two solutions coincide: $\Theta(0)=\tilde{\Theta}(0)$ and hence from uniqueness we deduce $\Theta(x)=\tilde{\Theta}(x)=\Theta(-x)$ to be an even function.
\end{proof}
In order for $\Theta$ to be even we find $\alpha=\beta\equiv 2 \Theta_0$, which yields:
\begin{equation}
\Theta=\Theta_0\cosh\left( {C}x\right),
\label{eq:Thetasolve}
\end{equation}
Substitution of \autoref{eq:Theta} into \autoref{eq:T-Theta} yields for $T$ after integrating twice:
\begin{equation}
T=\frac{1}{2}\Theta_0\cosh\left( {C}x\right)+a+b\;x,\label{eq:Tfin}
\end{equation}
where we find three integration constants: $a,b$ and $\Theta_0$. First, we use continuity of flux in a single fin to find per pair of opposing fins, \textit{i.e.} per $2W$:
\begin{equation}
-(W-D) T'|_{-\frac{1}{2}+\delta}=2\int_{-\frac{1}{2}+\delta}^{\frac{1}{2}-\delta} \frac{k}{D}\Theta\; \mathrm{d}x-(W-D)\; T'|_{\frac{1}{2}-\delta}.
\end{equation}
The left-hand side denotes the flux at the base of the fin, the integral the flux crossing the gap sideways and the last term the heat exchange at the tip of the fin. Using \autoref{eq:Tfin} one obtains
\begin{equation}b=\Theta_0\frac{{C}}{2} \sinh{(\mathcal{C})}-\frac{2}{W-D} \frac{k}{D}\Theta_0\frac{2}{{C}}\sinh{(\mathcal{C})}-\frac{k}{D}\Theta_0\cosh(\mathcal{C}), \nonumber
 \end{equation}
where $\mathcal{C}=C(\frac{1}{2} -\delta)$. Next we approximate the flux through the support of the fins as $ 2 W  \left(T(-\frac{1}{2})-T(-\frac{1}{2}+\delta)\right)/\delta$. This flux is equal to the sum of the heat entering the fin and directly flowing from the base into the tip of the opposing fin: $-(W-D) T'|_{-\frac{1}{2}+\delta}+(W-D)\frac{k}{D} \Theta(-\frac{1}{2}+\delta).$ This way we can express $a$ as:
\begin{align}
\label{eq:a}
a=&-\frac{1}{2}\Theta_0\cosh{(\mathcal{C})}-b\left(-\frac{1}{2}+\delta\right) \dots\nonumber\\
&-\delta\frac{W-D}{2W}\left(\Theta_0\frac{{C}}{2}\sinh(\mathcal{C})-b\right)+\delta\frac{W-D}{2W}\frac{k}{D}  \Theta_0 \cosh(\mathcal{C}).
\end{align}

Finally, we can close the problem by evaluating the solution at $x=-1/2$, where we find $T=0$, to obtain the value of $\Theta_0$. Note that $\Theta_0$ is the temperature difference between the fins at $x=0$. The symmetry of the two opposing fins allows us to write $\Theta_0=2(T(x=0)-1/2)$, where $1/2$ represents $(T_h+Tc)/2$. We then find $a=1/2$, from which one can obtain $\Theta_0$ more easily using \autoref{eq:a}. $\Theta_0$ is therefore a measure for the amount of cooling in the system: in the isothermal limit we find $\Theta_0=-1$.

We now arrive at the following equation for the temperature field:

\begin{align}
\label{eq:threepiecessolution}
T=
\begin{cases}
-(x+\frac{1}{2})\; \frac{W-D}{2W}\left(\Theta_0\frac{{C}}{2} \sinh{(\mathcal{C})}-b\right)&\text{\small : $ -\frac{1}{2}\leqslant x<-\frac{1}{2}+\delta$}\\
\text{\autoref{eq:Tfin}} &\text{ \small : $  -\frac{1}{2}+\delta \leqslant x< \frac{1}{2}-\delta$} \\
1-(x+\frac{1}{2})\; \frac{W-D}{2W}\left(\Theta_0\frac{{C}}{2} \sinh{(\mathcal{C})}-b\right)&\text{ \small  :  $-\frac{1}{2}-\delta\leqslant x<\frac{1}{2}$}
\end{cases}
\end{align}

whose dimensional form can be obtained easily. A similar procedure can be followed to obtain $\mathcal{T}$.  
\paragraph{Heat flux}
From an application perspective, the quantity of interest is the (dimensional) heat exchanged per frontal unit area. We find for the width of a single pair of fins:
\begin{equation}
\label{eq:dim"lessflux}
2W q''= \frac{k }{D}\Theta_0 \left( \frac{4}{{C}}\sinh{(\mathcal{C})}+2(W-  D)\cosh{(\mathcal{C})} \right),
\end{equation}

or in dimensional units
\begin{equation}
\label{eq:dimflux}
2\hat{W}\hat{q}''=\Delta \hat T \frac{\hat{k}_g \hat{\mathcal{L}}}{\hat{D}}\Theta_0 \left( \frac{4}{{C}}\sinh{(\mathcal{C})}+2\frac{\hat W- \hat D}{\hat L}\cosh{(\mathcal{C})} \right), 
\end{equation}
 from which $\hat{q}''$ can be obtained. It is interesting to expand \autoref{eq:dimflux} in the limit of $C \rightarrow 0$ as it reduces correctly to \autoref{eq:isothermalflux}: $\hat q''=\Delta \hat T \frac{\hat{k}_g \hat{\mathcal{ L}}}{2\hat W \hat D}\Theta_0 \left( 4 \cdot\left( \frac{1}{2}-\frac{\hat{\delta}}{\hat{\mathcal{L}}}\right)+2\frac{\hat W- \hat D}{\hat {\mathcal{ L}}} \right)+\mathcal{O}(C^2)+\dots$.
 %
 % We then obtain 
 %\begin{equation}
% \hat q''=\Delta \hat T \frac{\hat{k}_g \hat L}{2\hat W \hat D}\Theta_0 \left( 4\times \frac{1}{2}+2\frac{\hat W- \hat D}{\hat L}\right),
%\end{equation}
% which is exactly 
\section{Results}
The analytical model is first validated using numerical solutions  of the problem using the commercial package Comsol \cite{Comsol}.  The steady state heat equation was solved by a finite element method on a triangular grid. We consider one half of a pair of opposing fins inside a large stack, avoiding any edge effects. The hot and cold faces have Dirichlet boundary conditions where the longitudinal boundaries satisfy the no-flux condition. The solid-gas boundaries satisfy continuation of heat flux and a no-jump condition in temperature.

\begin{figure}
\centering
\includegraphics[width=\columnwidth]{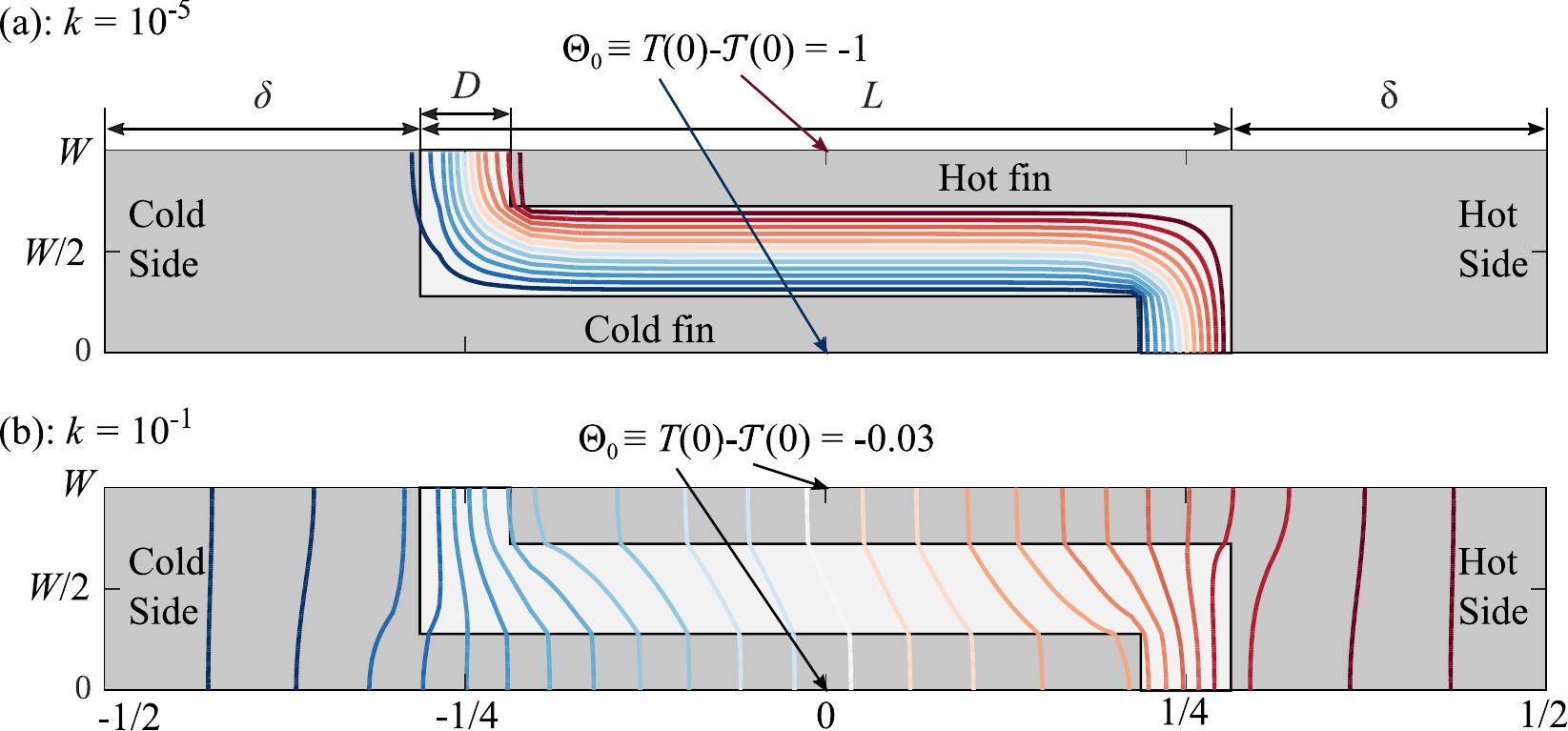}
\caption{Numerical solution for $D=0.0625$, $\delta=0.22$ and $W=0.14$. Results for a low conducting gas of $k=\hat{k}_g/\hat{k}_s=10^{-5}$ is displayed in panel (a), whereas panel (b) was solved with $k=10^{-1}$, resulting in $C=0.9 $ and $9.1$ resp. The coloured lines indicate isotherms between $T=0$ (blue) and 1 (red). }
\label{fig:plotexample}
\end{figure}

Two cases are presented in \autoref{fig:plotexample} for  $C=0.9 $ and $9.1$, where the gas conductivity is varied. It can clearly be seen that the isotherms in \autoref{fig:plotexample}(a), shown by the coloured lines, lay mostly within the gas gap. Strong gradients appear in the panel (b) however, as expected since $C>1$. In the centre, the temperature difference between the two profiles $\Theta_0$ is added. We find that for isothermal fins it is close to unity, while decreasing when significant gradients occur in the film since the conductivity of the gas increases. 

We test now how trustworthy $C$ is in predicting the occurrence of non-isothermal fins. We vary the parameters $W$, $\hat{\mathcal{L}}$, $D$, $\hat{k}_s$ and $\hat{k}_g$ independently and study the behaviour of $\Theta _0$, which is presented in \autoref{fig:ResultCTheta}. Despite through which parameter $C$ is changed, the behaviour of $\Theta_0$  is the same. We clearly find that for $C<0.1$ the fins are isothermal, whereas for $C=[0.1,1]$ it only decreases by a fifth at most.   Increasing $C$ further results quickly in a decrease in effectiveness where from $C\approx10$ no significant temperature difference between the fins is found. 
The behaviour of $\Theta_0$ was accurately fitted using a single fitting parameter $b$:
\begin{equation}
\Theta_0=\frac{1}{2}\left( \erf \left( \log_{10}\left( C^2 \right) +b \right) -1\right),
\label{eq:theta0fit}
\end{equation}
as presented in \autoref{fig:ResultCTheta}, where we found $b=-0.7$.
\begin{figure}
\centering
\includegraphics[width=0.9\columnwidth]{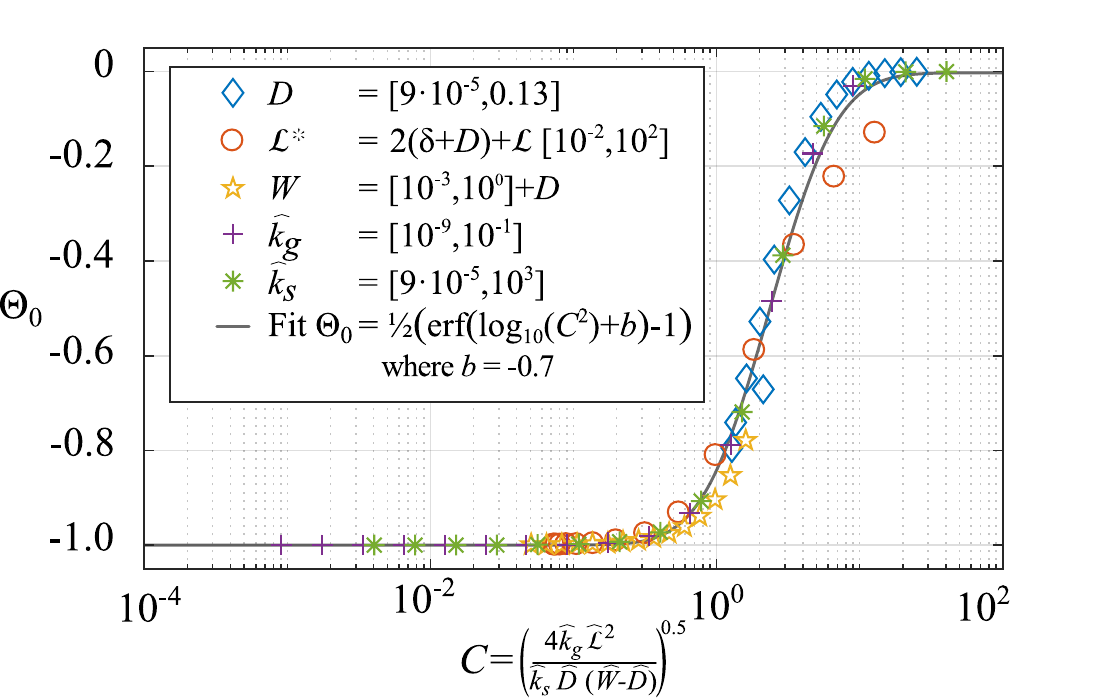}
\caption{Temperature difference $\Theta_0$ between the two fins. Rescaling the parameters using $C$ reveals a single curve for which \autoref{eq:theta0fit} was fitted (black line). }
\label{fig:ResultCTheta}
\end{figure}

\begin{figure}
\centering
\includegraphics[width=\columnwidth]{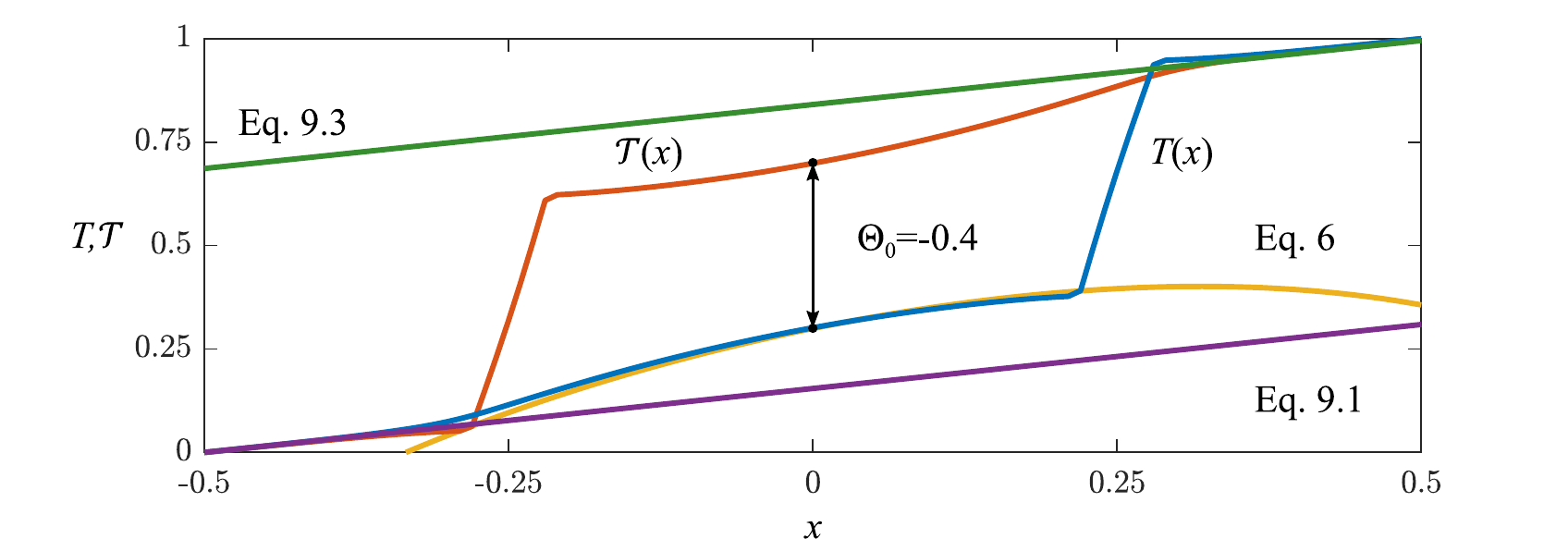}
\caption{Comparison between numerical solutions of $T$ (blue) and $\mathcal{T}$ (red) and the analytical model for $C=2.9$. $\Theta_0$ was evaluated to be $-0.4$. }
\label{fig:mathexample}
\end{figure}

So far we have focussed purely on numerical results, so let us now focus on comparing the solutions to the model developed earlier in  \autoref{sec:math}. From the numerical solution for $C=2.9$ and $\delta=0.22$ we obtain the profiles in the hot and cold fin, as shown in \autoref{fig:mathexample}. Focussing on the the blue temperature profile $T$ of the cold fin, we use \autoref{eq:threepiecessolution}, plotted in purple, yellow and green. Although the agreement looks good for all pieces, we quantify this by comparing the following properties: first we compare the difference in $C$ and $\Theta_0$ by fitting the temperature difference of the numerical solution using \autoref{eq:Thetasolve}. We compare the fit then with the expression for $C$ and $\Theta_0$. Secondly we compare the difference of the fin profiles between the numeric- and analytical model using the relative error
\begin{equation}
e_{fin}=\frac{1}{\Delta x}\frac{2}{\bar{T}_\mathrm{num}+\bar{T}_\mathrm{Eq. 6}} \sum_i \left| T_\mathrm{num}-T_\mathrm{Eq. 6}\right|,
\label{eq:errTEq6}
\end{equation}
where the sum is taken over all positions $i$ in the fin, being spaced by $\Delta x$ and the bar representing the mean temperature of the fin. 

For an application perspective, the error in the heat flux between the plates is more relevant, which can be  expressed as $q_\mathrm{num}/q_\mathrm{Eq.10}$. We now vary the conductivity ratio $k\equiv \hat k_g/\hat k_s$ as well as the slenderness of the fins $W/L$, while keeping the ratio between $W$ and $D$ constant. Figure \ref{fig:performance} shows the performance of the model varying $k=[10^{-5},10^{-1}]$ and $\hat{\mathcal{L}}=[ 3.2, 32]$ mm. The numerical data was evaluated in the centre of the hot and the cold fin after which the difference was fitted by \autoref{eq:Thetasolve} with $C$ and $\Theta_0$ as fitting parameters. For large $C$, the data was fitted in log-space for numerical stability. The results are presented in \autoref{fig:performance}, where the error in $C$ is the ratio between $C$ (as defined in \autoref{sec:math}) and the fit. We obtain good agreement over the complete phase space.  The error in $\Theta_0$ was obtained by dividing the numerical fitting parameter by the value from the analytical model developed in \autoref{sec:math}. We find good agreement for the parameter space where $\Theta_0>-0.05$.  The analytical model however starts to underestimate $\Theta_0$ for $C>20$. The deviation seems to be more strongly related to the cooling, \textit{i.e.}, high $k$ then the slenderness $L/(W-D)$ of the fins. The two panels on the right show the error in the heat flux, deviating only more than 5\%  for $\Theta_0>-0.05$ as well.  Evaluating \autoref{eq:errTEq6} shows  that the mean difference between the numerical temperature profiles and the analytical model is smaller than 3\% for the complete phase space studied. The good agreement found between the analytical model and the numerical results demonstrate the robustness as well as the accuracy of our one dimensional model.

\begin{figure*}
\centering
\includegraphics[width=\textwidth]{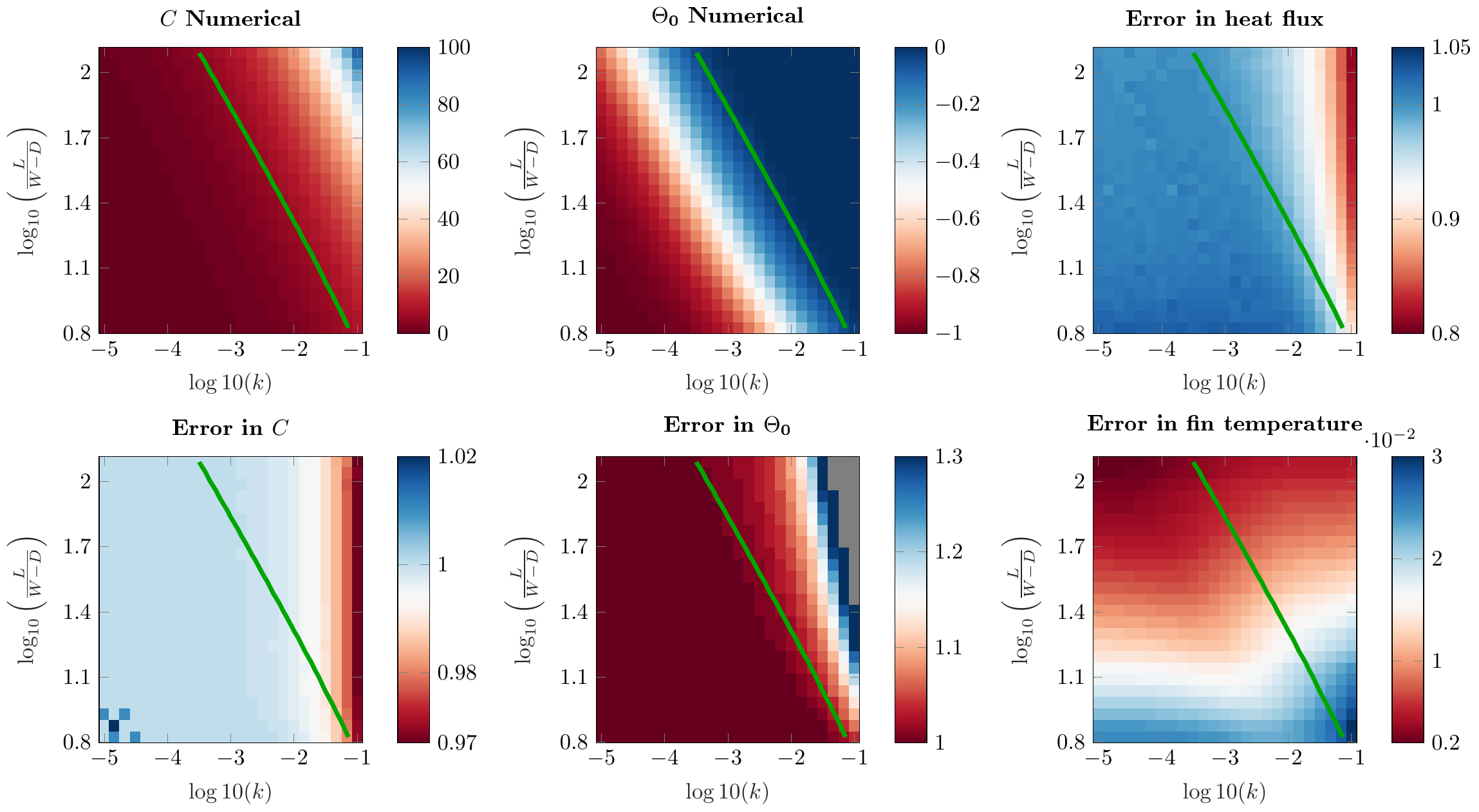}
\caption{The accuracy of the analytical model was tested for $\hat{ \mathcal{L}}=[ 3.2, 32]$ mm, $\hat \delta = 0.7$ mm, $\hat W= 0.45 $ mm and  $\hat D = 0.2 $ mm, from which $C$ (top left panel) and $\Theta_0$ (top middle panel) are presented. Errors are calculated using the methods in the text. Values of $-\Theta_0<0.05$ are left of the green lines, whereas the gray area denotes errors  exceeding 1.3.}
\label{fig:performance}
\end{figure*}

\section{Discussion}
In the previous section we developed and validated an analytical model to calculate the temperature profile in the fins of a heat switch and gave predictions for the heat flux. The parametric study for $D, \hat{\mathcal{L}}, W, \hat{k}_g$ and $\hat{k}_s $ shows that non-isothermal effects can be understood by the use of $C$. The fit \autoref{eq:theta0fit}  was able to capture the behaviour of $\Theta_0$ up to a single fitting constant: $b$. We found that the mesh generated for numerical study became too coarse for $\hat{\mathcal{L}}>8$, whose results were therefore neglected in the fitting procedure. The fit is of great value in a quick evaluation of the heat exchange across the heat switch: Equation \ref{eq:dimflux} can now be evaluated by the system parameters, $\Delta T$ and \autoref{eq:theta0fit}.

Let us briefly comment here on the fit parameter $b$. Additional simulations of the problem made clear that $b$ is a function of the thickness of the material $\delta$. By fixing the fin geometry and varying $C$ through $k$, we found that $b(0)=-0.4$ for $\delta \rightarrow 0$ and more general 
 \begin{equation}
 b\approx\frac{-0.42 }{1-1.7\delta},
 \label{eq:b}
 \end{equation}
valid for $\delta=[0, 15]$. This results in a shift of the function along the $C$- axis since $C$ is based on the complete domain. Basing $C$ on the fin length does however not eliminate the $\delta$ dependency of $b$ completely as the thickness $\delta$ of solid material itself acts as an insulating layer. For practical applications therefore, one wants to minimize the thickness $\delta$ for this reason and use $b\approx -0.4(1+1.7 \delta)$ as an approximation of \autoref{eq:b} for moderate $\delta$.

From the validation study presented in \autoref{fig:performance} of the analytical model we obtain good agreement in the regions of significant exchange between the fins. First of all, the fitted value for $C$ and $\Theta_0$ agrees within a few percent with the analytical expression (see \autoref{eq:Theta}), as well as the deviations in the temperature profiles. We encounter only strong deviations for $C>50$ or $k>0.05$. In those situations however, the effectiveness of the fins is greatly reduced anyway: The length of the fins act in fact as an insulating layer, negating the effect of the additional surface area is in this way. Our model performs thus best in the parameter space, that is relevant to the problem. We speculate that deviations occur by two reasons: First, for constant $C$, less slender fins suffer from cooling in the perpendicular directions and the one-dimensional model becomes less accurate. Secondly, for increasing $k$, the isotherms in the gap become less parallel to the side walls of the fin, see \autoref{fig:plotexample}b. As a result, the flux lines $-k\nabla T$ are no longer normal to the wall,  which is not included in the analytical model. Despite these two effects the performance of the analytical model is still good.

Finally we compare our model with the experimental measurements performed in our lab at ambient \cite{krielaart2015compact} and cryogenic \cite{vanapalli2016cryogenic} temperatures. Let us briefly elaborate on the relevant details of the setup. The heat switch is constructed out of a sintered titanium alloy (Ti$_6$Al$_4$V grade 5), having 49 rows of interleaving fins. The switch operates between the 'On'-state, where a high heat transfer is required, and the 'Off'-state by removing the gas in the gas gap. Although the gas conductivity is at most weakly dependent on the ambient pressure for bulk values, it can drastically be reduced as a result of the confinement by the fins. For pressures below 100 Pa the gas must be treated as a Knudsen gas. Krielaart et al.  \cite{krielaart2015compact} studied helium, nitrogen and hydrogen gas at room temperature ($T_m=\SI{294}{\kelvin}$), whereas Vanapalli et al. \cite{vanapalli2016cryogenic} studied helium and nitrogen at cryogenic conditions. For the cryogenic measurements the setup was operating at an mean temperature of  $\SI{117.5}{\kelvin}$, resulting in a bulk lower conductivity of the gas, as well as that of the solid compared to the ambient case. 

As shown in both references,  proper modelling of the full experimental setup required the inclusion of a contact resistance at both sides of the heat switch, see for instance \cite{yovanovich1973effect}.  Graphite foil of \SI{350}{\micro \meter} and thermal paste were added to provide good thermal contact with the segments controlling the hot and cold boundary conditions \cite{vanapalli2016cryogenic}. To obtain good agreement with the model, the total contact resistance was found to be around \SI{1/2e3}{\kelvin \per \watt}, which can be estimated using $\ell/k_{c}$. Here $k_c=0.2$ is the conductivity of the thermal paste (Apizon N) and $\ell\approx \SI{100}{\micro \meter}$ the typical layer thickness. This estimate is reasonable as it represents the roughness of the heat switch, which was made out of sintered  titanium alloy grains having a diameter of \SI{100}{\micro \meter}. Since graphite is a good thermal conductor it does not contribute to the contact resistance. 

The comparison of the experimental measurements with the analytical model is presented in \autoref{fig:Exp}. Both the helium and hydrogen measurements show good agreement with the model, whereas the largest discrepancy was found for nitrogen measurements, being approximately 10\%. The thermal contact resistance resulted in the reduction of the total heat transfer of 40\% for the highest gas conductivities measured: indeed, for those cases the thermal resistance of the heat switch becomes of the same order as the contact resistance, resulting in a significant temperature drop across the thermal paste layer.
\begin{figure}
\centering
\includegraphics[width=\columnwidth]{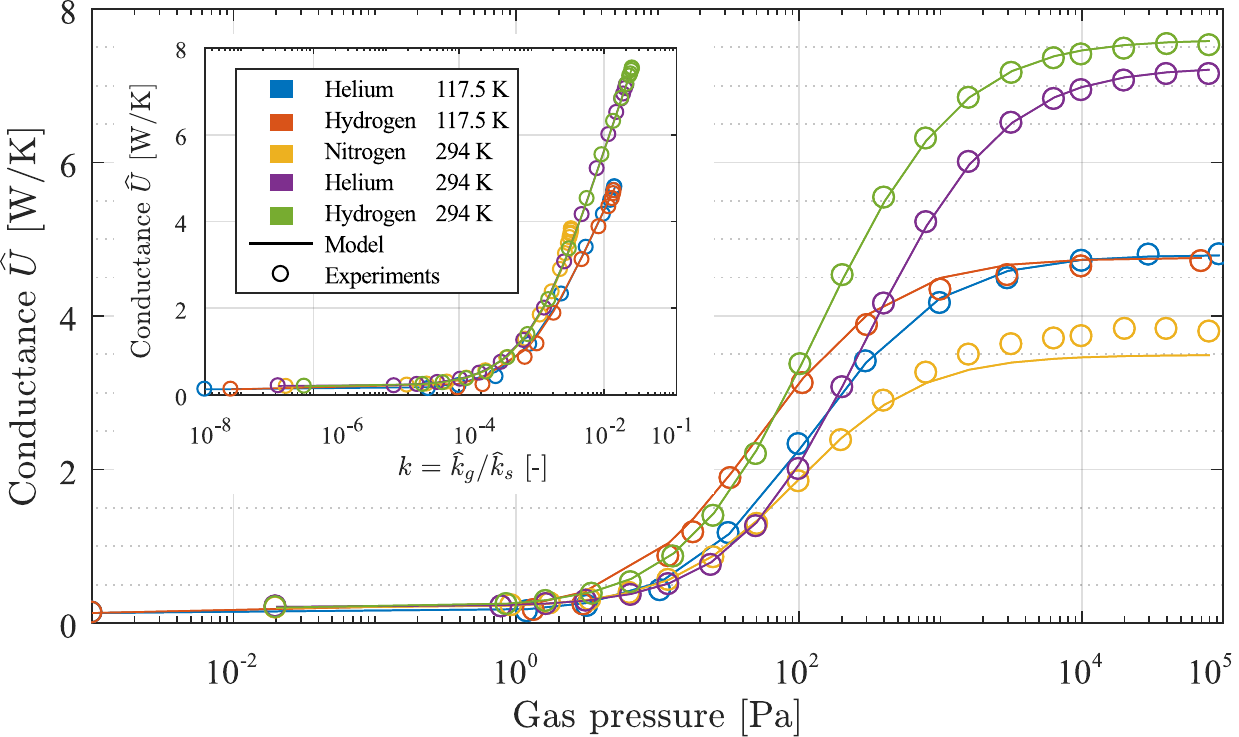}
\caption{Comparison of the model (solid lines) with experiments. The conductivity ratio $k$ is varied by reducing the pressure in the heat switch, for which the system is subject to Knudsen effects.}
\label{fig:Exp}
\end{figure}

\section{Conclusion and Outlook}

%We found solutions to the heat exchanger where the fins can undergo cooling. The performance of the fin is strongly depending whether or not this effect takes place.  We found that the non-dimensional parameter  $C=\frac{4 \hat{k}_g\hat{\mathcal{L}}^2 }{\hat{k}_s\hat{D}(\hat{W}-\hat{D})}$ characterizes this well, collapsing numerical solutions of the problem when varying $C$ through the different parameters over six orders of magnitude. For $C<1$ the cooling can be neglected and one should use \autoref{eq:isothermalflux}. In the other cases, we developed and tested a analytical model which showed excellent agreement with both numerical solutions to the problem for a large parameter space as well as with experiments.\\
%
%
%Gradients in y-direction? Bi=kg(W-D)/(ks D)>1\\
%Outlook\\
%Timescales: thermal timescale. Cooling $\tau_{iso}=\hat{k}_s\rho Cp/(\hat{k}_g/\hat{D})^2$. Heating/recovering $\tau_{rec}=\hat{\mathcal{L}}^2\rho Cp/\hat{k}_s$. Ratio: $\Pi=\tau_{iso}/\tau_{rec}=\left(\frac{D}{k}\right)^2$.\\
%Flow timescape: much faster (estimate?)
%
%
%
%Transient problem.
%Hydrodynamical timescales.
%Diffusion timescales.
%begin: cooling niet relevant: performance initieel perfect.

We found solutions to the interleaving fins where the fins may develop thermal gradients. 
The performance of the fin strongly depend whether or not this effect takes place. 
We found that the non-dimensional parameter  $C=\left(\frac{4 \hat{k}_g\hat{\mathcal{L}}^2 }
{\hat{k}_s\hat{D}(\hat{W}-\hat{D})}\right)^{1/2}$ 
characterizes this well, collapsing numerical solutions of the problem when varying $C$ 
through the different parameters over six orders of magnitude. For $C<1$ thermal gradients may be neglected and one should use \autoref{eq:isothermalflux}. In the other cases, we developed and tested an analytical model which showed excellent agreement with both numerical solutions to the problem for a large parameter space as well as with experiments. Our work offers analytical solution to find the heat flux of this type of  heat exchanger and gives new insights, essential in designing and optimizing such systems once the thermal properties are known. \\

While we studied the steady state behaviour, let us finally briefly discuss the expected behaviour of the transient operation. 
When the system of interleaving fins is used in a heat switch for switching between `on' and  `off' states, the corresponding cooling and heating of the fins themselves become relevant.
Intuitively, one can expect the off state  to be in the isothermal regime. Recovery from an previous `on' state is thus determined by heat diffusing from the hot or cold side to the tip of a fin. The (dimensional) diffusive timescale is then $\hat{\tau}_{rec}=\hat{L}^2 \hat{\rho} \hat{c}_p /\hat{k}_s$, where $ \hat{\rho}$ and $\hat{c}_p$ are the  density and specific heat of the solid. When the switch is turned `on' again, heat is first exchanged locally, for which we use the thermal timescale for transient heat transfer: $\hat{\tau}_{trans}=\hat{k}_s  \hat{\rho} \hat{c}_p/ (\hat{k}_g/\hat{D})^2$ \cite{Bejan}. It is interesting to look at the ratio between the two for which we find  $(\hat{k}_g/\hat{k}_s \cdot \hat{D}/\hat{L})^2$.  We find then that this gives a ratio of time-spans for cycling between on and off states, provided that the fluid-flow time scale is fast. This is not always ensured as viscous effect can start to play a role for small length scales. This trade off is important if one is interested in a transient operation of the fins.

\bibliography{fin}
\bibliographystyle{ieeetr}

\end{document}